\newtheorem{lemma}{Lemma}[section]
\newtheorem{proposition}{Proposition}[section]
\DeclareMathOperator{\acos}{acos}
\DeclareMathOperator{\dist}{dist}
\DeclareMathOperator{\normalize}{normalize}
\renewcommand{\vec}[1]{\mathbf{#1}}
\title{Trajectory tracking control of kites \\ with system delay}
\date{December 27, 2012}
\author{
  J. H. Baayen\thanks{Baayen \& Heinz GmbH, Sekr. ER2-1, Hardenbergstra\ss e 36a, 10623 Berlin, Germany, jorn.baayen@baayen-heinz.com.}
}
\begin{document}

\maketitle

\begin{abstract}
A previously published algorithm for trajectory tracking control of tethered wings, i.e. kites, is updated in light of recent experimental evidence.  The algorithm is, furthermore, analyzed in the framework of delay differential equations.  It is shown how the presence of system delay influences the stability of the control system, and a methodology is derived
for gain selection using the Lambert $W$ function.  The validity of the methodology is demonstrated with simulation results.  The analysis sheds light on previously poorly understood stability problems. 
\end{abstract}

\section{Introduction}

During the past two decades researchers have taken an increasing interest in the industrial applications of tethered wings, i.e., kites.
Two well-known applications include the towing of ships \cite{Naaijen2006} and wind- to electrical power conversion \cite{Loyd1980,Canale2007}.  An integral part of both technologies is autonomous kite flight 
along a prescribed trajectory, relative to the earth surface attachment point.  Various control principles have been proposed, see, e.g, \cite{Fechner2012,Erhard2012,Baayen2012,Fagiano2009,Williams2007,Ilzhofer2006}.  This paper discusses the influence of system delays on trajectory tracking performance.  The issue of system delays has been raised before in \cite{Fechner2012}, where a wireless link is part of the control loop, and in \cite{Erhard2012}, where actuator delay is identified.  Up to now, however, no rigorous analysis of the effect of delay on system performance has been carried out.  

\begin{figure}[h]

\centering

\includegraphics[width=200pt]{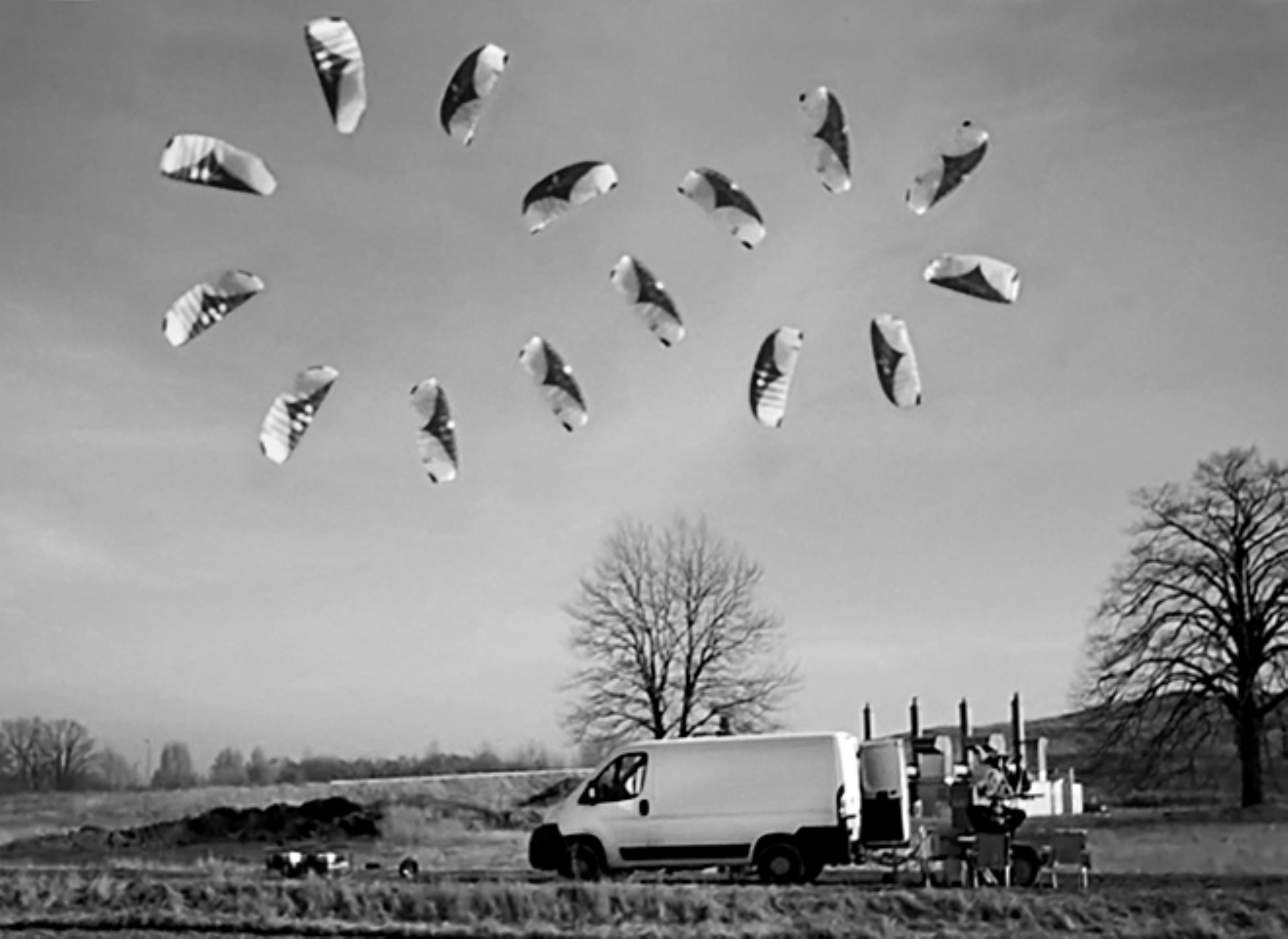}
\caption{A $5$ m$^2$ de-power kite on a $40$ m line in automated flight. Composite photo taken near Schöneiche, Germany, on February 22, 2012.}
\label{fig:kite-montage}

\end{figure}

A small kite such as the one shown in Figure \ref{fig:kite-montage} is highly sensitive to control inputs, and hence to system delay.  The ability to fly such small wings smoothly with minimal control expenditure therefore hinges on an accurate treatment of their fast dynamics.  We use the theory of delay differential equations to incorporate system delay into the stability analysis of a kite trajectory tracking control algorithm.

We base our analysis on the controller previously laid out in \cite{Baayen2012}.  Since its initial publication, the algorithm has been applied in practice (Figure \ref{fig:kite-montage}), and in the process, has undergone several revisions. In the next section, we commence by summarizing our algorithm in its most recent form.  We then discuss the stability of an elementary linear differential equation with delay.  Noting that our inner loop controller shapes the system into a delayed linear system of the same type, we proceed to identify its region of stability and, in a certain sense, optimal control gain.  In the final section we demonstrate the applicability of the result in simulation.

\section{Trajectory tracking control}\label{section:control}

In \cite{Baayen2012} we showed how the turning, or path, angle of the kite trajectory can be used to formulate a cascaded single-input single-output control problem.  In this section we briefy recapitulate this result, and present several modifications which have turned out to improve performance in practice.

First of all, the presence of the kite's tether constrains its movement to a sphere centered on the tether's earth surface attachment point.  In our analysis, the radius of this sphere turns out to be irrelevant, and therefore we restrict our attention to the unit sphere centered on the origin.

Restricted to the unit sphere, we can speak of geodesic, or great-circle, distances
\[
\dist(\vec{p},\vec{q}) := \acos (\vec{p} \cdot \vec{q}),
\]
as well as of tangent planes.  After defining a basis in the tangent bundle, we can measure the angle between any tangent vector and the primary basis vector.  For vectors tangent to a curve, this angle is known as the turning angle \cite{Gray1997} -- in this paper denoted as $\theta$ (Figure \ref{fig:turning-angle}).

\begin{figure}[h]

\centering

\includegraphics[width=250pt]{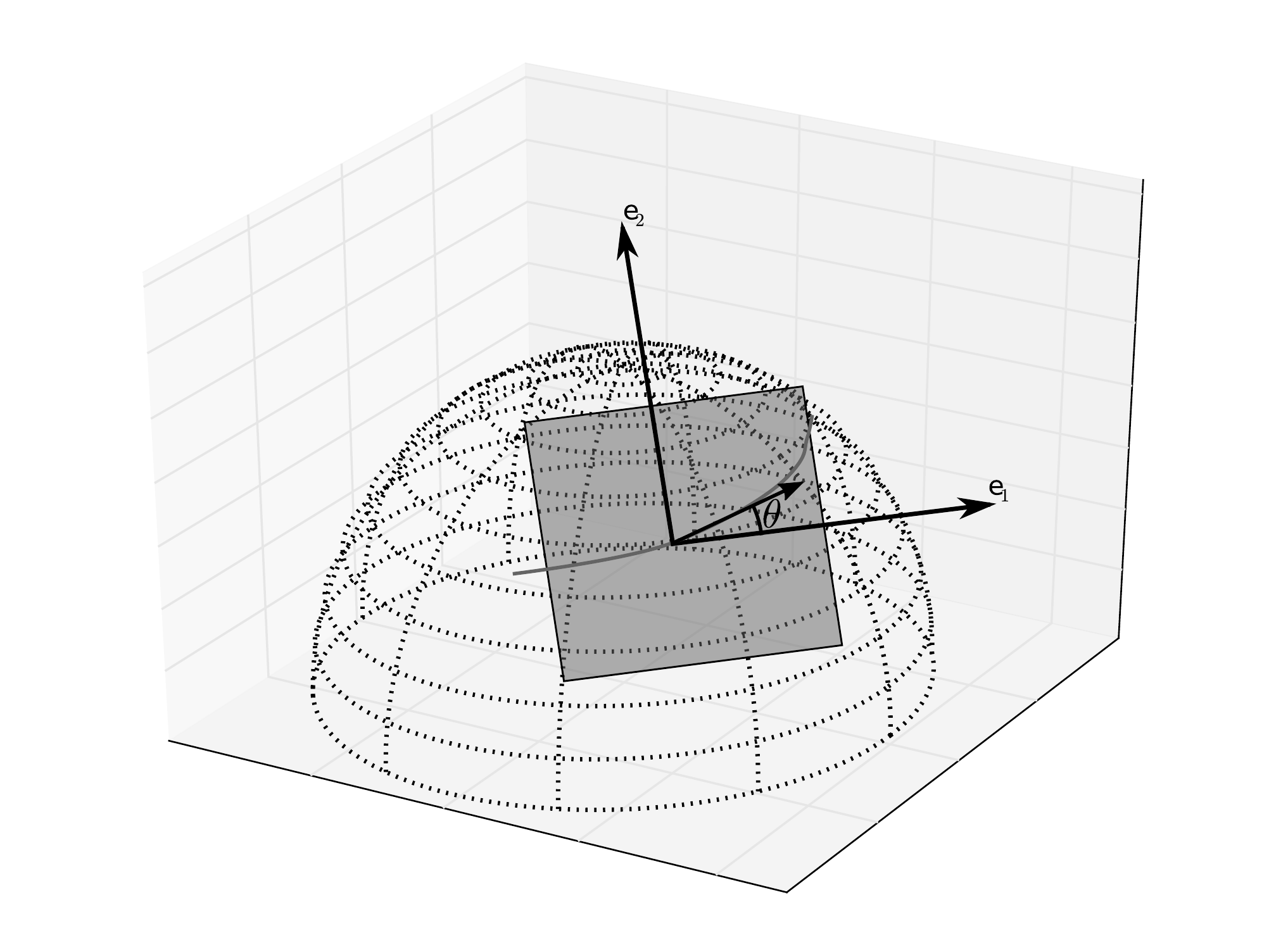}
\caption{The notion of turning, or path, angle $\theta$.}
\label{fig:turning-angle}

\end{figure}

In \cite{Baayen2012} we defined the steering control input, $u$, as the length difference of the kite steering lines.  We showed how this control input $u$ can be used to control the direction of flight, represented by the turning angle $\theta$.  Secondly, we used spherical geometry to derive an outer loop control law that selects a target turning angle, $\theta_t$, such as to guarantee convergence to the target trajectory.

A commonly prescribed target trajectory is the crosswind, lying figure of eight (Figure \ref{fig:kite-montage}).  This trajectory has a bounded winding number, thereby avoiding problems with lines winding around themselves and each other.

We now proceed to summarize the inner and outer loop control laws.

\subsection{Inner loop}

In \cite{Erhard2012} it is, based on experimental evidence, claimed that
\begin{equation}\label{eq:turning-rate-law}
\dot{\theta} = K V_a u,
\end{equation}
with proportionality factor $K$, airspeed $V_a$, and steering control input $u$.\footnote{In \cite{Erhard2012}, the turning angle $\theta$ is called the ``orientation angle'', denoted $\psi$.}  The
linear structure of Equation \ref{eq:turning-rate-law} obsoletes the incremental feedback-linearizing approach previously taken by us in \cite{Baayen2012}, and thereby simplifies matters significantly.

However, the airspeed $V_a$ is also a function of the turning angle, and hence also a function of the steering control input $u$.  Implicit in Equation \ref{eq:turning-rate-law} is the assumption that the influence of the control input $u$ acts on the turning rate $\dot{\theta}$ more quickly than it acts on the airspeed $V_a$.  We return to this time-scale separation in Section \ref{section:dde-control}.

\begin{proposition}
Consider the system given by Equation \ref{eq:turning-rate-law}. The control law
\[
u := \frac{1}{K V_a} (\dot{\theta_t} + P(\theta_t - \theta)),
\]
with gain $P \in \mathbb{R}^+$, renders the Lyapunov function
\[
V := \frac{1}{2}(\theta_t - \theta)^2
\]
decreasing along the flow.
\end{proposition}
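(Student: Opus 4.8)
The plan is to show that, under the stated feedback, the closed-loop tracking error obeys a simple exponentially stable scalar ODE, so that $V$ is a genuine Lyapunov function with negative-definite derivative. I would begin by introducing the tracking error $e := \theta_t - \theta$, so that $V = \frac{1}{2}e^2$ and, differentiating along the flow, $\dot{V} = e\,\dot{e} = e\,(\dot{\theta_t} - \dot{\theta})$. Everything then reduces to computing $\dot{\theta}$ under the prescribed control law.

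The key step is to substitute $u$ into the plant dynamics of Equation~\ref{eq:turning-rate-law}. This gives $\dot{\theta} = K V_a \cdot \frac{1}{K V_a}\bigl(\dot{\theta_t} + P(\theta_t - \theta)\bigr) = \dot{\theta_t} + P e$, where the airspeed factor $K V_a$ cancels exactly. Consequently the error dynamics collapse to the first-order linear system $\dot{e} = \dot{\theta_t} - \dot{\theta} = -P e$. From here the conclusion is immediate: $\dot{V} = e\,\dot{e} = -P e^2$, which is nonpositive since $P \in \mathbb{R}^+$, and strictly negative whenever $e \neq 0$, so $V$ decreases along the flow. If one wants the explicit rate, integrating $\dot{e} = -P e$ yields $V(t) = V(0)\,e^{-2Pt}$.

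Honestly, the computation presents no real analytic obstacle; it is an exact feedback cancellation. The only points deserving care are well-definedness and the standing assumptions. The control law requires $K V_a \neq 0$, i.e.\ a nonzero airspeed, for the division and subsequent cancellation to be meaningful—this is implicit in the model and holds whenever the kite is actually in flight. I would also flag, without treating it as a gap in the proof, that the exact cancellation relies on treating $V_a$ and $\dot{\theta_t}$ as given signals and on ignoring the dependence of $V_a$ on $u$ noted in the text above. That idealization is exactly the time-scale separation assumption whose relaxation motivates the delay analysis in Section~\ref{section:dde-control}, so the Lyapunov result here should be read as the nominal, delay-free baseline rather than the final word on stability.
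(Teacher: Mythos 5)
Your proof is correct and is exactly what the paper intends: the paper's proof consists of the single instruction ``Expand $dV/dt$,'' and your computation---substituting the control law so that $K V_a$ cancels, obtaining $\dot{e} = -Pe$, and concluding $\dot{V} = -Pe^2 \leq 0$---is precisely that expansion carried out in full. Your added remarks on the need for $K V_a \neq 0$ and on the time-scale-separation idealization are sensible caveats consistent with the paper's own discussion, not deviations from its argument.
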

\begin{proof}
Expand $dV/dt$.
\end{proof}

Future work on the inner loop may focus on the incorporation of actuator dynamics using, e.g., backstepping \cite{Khalil1996}.

\subsection{Outer loop}

As the kite moves along its trajectory, $\boldsymbol\gamma$, the desired target point $\boldsymbol\gamma_t(s)$ must slide along on the target trajectory, $\boldsymbol\gamma_t$.  To this end, we continuously locate the target point closest to the present kite location, i.e.,
\begin{equation}\label{eq:closest-target-point}
s(t):=\mbox{argmin}_{s} \dist(\boldsymbol\gamma(t),\boldsymbol\gamma_t(s)).
\end{equation}
This optimization problem is not convex.  But, since we desire to track a smoothly changing target location, we welcome local minima.  A simple gradient descent \cite{Fletcher1987} performs well for this purpose.  When initialized with the solution from the previous timestep, the method converges within a handful of iterations. Unlike the method we originally proposed in \cite{Baayen2012}, this method does not suffer from drift of the target location.

In the following we will, for brevity, identify $\boldsymbol\gamma_t$ with the composition $\boldsymbol\gamma_t \circ s$. Once a target point has been determined using Equation \ref{eq:closest-target-point}, a target turning angle needs to be selected.  To this end, let $\vec{T}_t$ denote the unit vector tangent to the target trajectory. Let $\vec{T}:=\operatorname{R}^T \vec{T}_t$, where the operator $\operatorname{R}$ rotates $\gamma$ to $\gamma_t$ along their connecting geodesic, so that $\vec{T}$ is an element of the tangent plane at $\boldsymbol\gamma$.
Let $\vec{t}:=\boldsymbol\gamma \times \vec{T}$, so that the vectors $\vec{T}$
and $\vec{t}$ form a basis of the tangent plane at $\boldsymbol\gamma$. Finally, denote
\[
\boldsymbol\gamma_t^{\perp}:=\boldsymbol\gamma_t-(\boldsymbol\gamma_t\cdot\boldsymbol\gamma)\boldsymbol\gamma.
\]

By selecting the target flight direction as an appropriately weighted sum of the ``direction along the target'' $\vec{T}$, and the ``direction towards the target'' $\vec{t}$, we obtain tracking convergence:

\begin{proposition}[Baayen, 2012 \cite{Baayen2012}]
The turning angle $\theta_t$ determined by
\begin{equation}\label{eq:outer-loop-control-law}
\cos \theta_t \vec{e}_1 + \sin \theta_t \vec{e}_2 = \normalize(\vec{T} + L (\vec{t} \cdot \boldsymbol\gamma_t^{\perp}) \vec{t}),
\end{equation}
with gain $L \in \mathbb{R}^+$, renders the geodesic distance
\[
W:=\dist(\boldsymbol\gamma,\boldsymbol\gamma_t)
\]
decreasing along the flow.\footnote{In the original paper, the required normalization is missing.  For practical applications this does not matter, as the trigonometric computation of $\theta_t$ is only a function of the direction of the vector, not of its norm.  Thanks to Dipl.-Ing. Claudius Jehle for pointing this out.}
\end{proposition}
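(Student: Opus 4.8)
The plan is to take the geodesic distance $W=\dist(\boldsymbol\gamma,\boldsymbol\gamma_t)$ itself as the Lyapunov function and to show $\dot W\le 0$ along the closed-loop kinematics. Writing $c:=\boldsymbol\gamma\cdot\boldsymbol\gamma_t=\cos W$ and noting that $\sqrt{1-c^2}=\sin W=\lvert\boldsymbol\gamma_t^{\perp}\rvert$, differentiation of $W=\acos c$ gives $\dot W=-\tfrac{1}{\sin W}\tfrac{d}{dt}(\boldsymbol\gamma\cdot\boldsymbol\gamma_t)$. Since $\dot{\boldsymbol\gamma}$ is tangent at $\boldsymbol\gamma$ we have $\dot{\boldsymbol\gamma}\cdot\boldsymbol\gamma=0$, hence $\dot{\boldsymbol\gamma}\cdot\boldsymbol\gamma_t=\dot{\boldsymbol\gamma}\cdot\boldsymbol\gamma_t^{\perp}$, so the task reduces to proving $\dot{\boldsymbol\gamma}\cdot\boldsymbol\gamma_t^{\perp}\ge 0$ while showing that the target-motion contribution $\boldsymbol\gamma\cdot\dot{\boldsymbol\gamma}_t$ does not spoil the sign.

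For the kinematics I would model motion on the sphere as $\dot{\boldsymbol\gamma}=V(\cos\theta\,\vec{T}+\sin\theta\,\vec{t})$ in the orthonormal tangent basis $\{\vec{T},\vec{t}\}$, with tangential speed $V>0$, and assume the inner loop of the preceding proposition is fast enough that $\theta=\theta_t$ (the time-scale separation flagged after Equation \ref{eq:turning-rate-law}). Because $\vec{T}$ and $\vec{t}$ are orthonormal, the control law of Equation \ref{eq:outer-loop-control-law} can be read off coordinate-wise as $\cos\theta_t=1/N$ and $\sin\theta_t=L(\vec{t}\cdot\boldsymbol\gamma_t^{\perp})/N$, with $N=\sqrt{1+L^2(\vec{t}\cdot\boldsymbol\gamma_t^{\perp})^2}$. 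Substituting and expanding $\boldsymbol\gamma_t^{\perp}$ in this basis yields $\dot{\boldsymbol\gamma}\cdot\boldsymbol\gamma_t^{\perp}=\tfrac{V}{N}\big[(\vec{T}\cdot\boldsymbol\gamma_t^{\perp})+L(\vec{t}\cdot\boldsymbol\gamma_t^{\perp})^2\big]$.

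The main obstacle, and the geometric heart of the argument, is to dispose of the stray term $\vec{T}\cdot\boldsymbol\gamma_t^{\perp}$, which has no definite sign on its own, together with the target-motion term; both are controlled by the optimality of the sliding target point. Differentiating the defining condition of Equation \ref{eq:closest-target-point} in $s$ and using $\boldsymbol\gamma_t'(s)\parallel\vec{T}_t$ gives the first-order condition $\boldsymbol\gamma\cdot\vec{T}_t=0$, i.e. the connecting geodesic meets the target trajectory orthogonally; this immediately annihilates $\boldsymbol\gamma\cdot\dot{\boldsymbol\gamma}_t$, since $\dot{\boldsymbol\gamma}_t$ is proportional to $\vec{T}_t$. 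I would then transport this orthogonality back to $\boldsymbol\gamma$ through the rotation $\operatorname{R}$: because $\operatorname{R}$ is an isometry fixing the connecting great circle, writing $\operatorname{R}\boldsymbol\gamma_t=2\cos W\,\boldsymbol\gamma_t-\boldsymbol\gamma$ (the rotation recurrence on that circle) gives $\vec{T}\cdot\boldsymbol\gamma_t^{\perp}=\vec{T}\cdot\boldsymbol\gamma_t=\vec{T}_t\cdot\operatorname{R}\boldsymbol\gamma_t=-\vec{T}_t\cdot\boldsymbol\gamma=0$. Consequently $\boldsymbol\gamma_t^{\perp}$ is parallel to $\vec{t}$ and $(\vec{t}\cdot\boldsymbol\gamma_t^{\perp})^2=\lvert\boldsymbol\gamma_t^{\perp}\rvert^2=\sin^2 W$.

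Assembling the pieces gives $\dot{\boldsymbol\gamma}\cdot\boldsymbol\gamma_t^{\perp}=\tfrac{VL}{N}\sin^2 W\ge 0$, and therefore $\dot W=-\tfrac{VL}{N}\sin W\le 0$, with strict decrease whenever $0<W<\pi$ and $V>0$. I expect the delicate points to be geometric rather than computational: justifying that $\operatorname{R}$ transports the tangent-to-geodesic angle isometrically, so that orthogonality at $\boldsymbol\gamma_t$ passes to orthogonality at $\boldsymbol\gamma$, and being explicit that the whole argument lives at the optimizer $s(t)$, i.e. that it is precisely the optimality condition of Equation \ref{eq:closest-target-point} that removes the indefinite cross term. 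The positivity of $V$ and the exclusion of the antipodal configuration $W=\pi$ should be recorded as standing assumptions.
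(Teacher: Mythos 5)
Your proof is correct, but a comparison with ``the paper's own proof'' is not possible here: this paper states the proposition without proof, importing it from the earlier reference \cite{Baayen2012} (only the normalization footnote is new). Judged on its own, your argument is sound and, importantly, it is the right argument for \emph{this} paper's setup, in which the target point is defined by the closest-point condition of Equation \ref{eq:closest-target-point}. The chain $\dot W=-\tfrac{1}{\sin W}\tfrac{d}{dt}(\boldsymbol\gamma\cdot\boldsymbol\gamma_t)$, the reduction of $\dot{\boldsymbol\gamma}\cdot\boldsymbol\gamma_t$ to $\dot{\boldsymbol\gamma}\cdot\boldsymbol\gamma_t^{\perp}$ via $\dot{\boldsymbol\gamma}\cdot\boldsymbol\gamma=0$, and the closed-loop velocity $\dot{\boldsymbol\gamma}=\tfrac{V}{N}\bigl(\vec{T}+L(\vec{t}\cdot\boldsymbol\gamma_t^{\perp})\vec{t}\bigr)$ are all correct, and you identified the genuine crux: in general $\vec{T}\cdot\boldsymbol\gamma_t^{\perp}=-\vec{T}_t\cdot\boldsymbol\gamma$ (your identity $\operatorname{R}\boldsymbol\gamma_t=2\cos W\,\boldsymbol\gamma_t-\boldsymbol\gamma$ checks out by expanding $\boldsymbol\gamma_t=\cos W\,\boldsymbol\gamma+\sin W\,\normalize(\boldsymbol\gamma_t^{\perp})$ and using double-angle formulas), so both the indefinite cross term and the target-motion term $\boldsymbol\gamma\cdot\dot{\boldsymbol\gamma}_t$ are annihilated precisely by the first-order optimality condition $\boldsymbol\gamma\cdot\vec{T}_t=0$ of the argmin. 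Since without that orthogonality the sign of $\dot W$ is not controlled uniformly in $L$, making this dependence explicit is a genuine contribution rather than pedantry. Two quibbles, neither a gap: (i) reading the control law as $\cos\theta_t=1/N$, $\sin\theta_t=L(\vec{t}\cdot\boldsymbol\gamma_t^{\perp})/N$ tacitly identifies $\{\vec{e}_1,\vec{e}_2\}$ with $\{\vec{T},\vec{t}\}$, which is not how $\theta_t$ is defined; this is harmless because only the flight-direction vector, not the angle bookkeeping, enters the computation, but you should say so. (ii) The hypotheses you flag at the end --- ideal inner loop $\theta=\theta_t$, $V>0$, exclusion of $W=\pi$, and differentiability of the tracked minimizer $s(t)$ --- are exactly the assumptions under which the proposition is meant to be read (they correspond to the time-scale separation invoked in Section \ref{section:dde-control}), and stating them as standing assumptions is the correct resolution.
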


\section{A delay differential equation}\label{section:dde}

We now introduce an elementary linear delay differential equation, which in Section \ref{section:dde-control} will prove to be a valuable tool for analyzing the stability of our kite control system in face of delay.

Let
\begin{equation}\label{eq:dde}
\frac{dy}{dt}(t)=ky(t-\tau),
\end{equation}
with $k \in \mathbb{R}$, a linear differential equation with delay $\tau$.  With the change of variables $s=t/\tau$ and $x(s)=y(\tau s)$, we obtain the equation
in its canonical form
\begin{equation}\label{eq:canonical-dde}
\frac{dx}{ds}(s)=ax(s-1),
\end{equation}
with $a=\tau k$. 

Unlike its relative without delay, Equation \ref{eq:canonical-dde} admits oscillatory solutions.  For example, when $a=-\pi/2$,
it is readily verified that
\[
x=\sin\left(\frac{\pi}{2} s\right)
\]
provides such a solution.  Hence it is not the case that a negative coefficient in Equations \ref{eq:dde} or \ref{eq:canonical-dde} guarantees convergence to the zero-solution.  We must therefore consider these equations in more detail.

It turns out that the solution space of Equation \ref{eq:canonical-dde} is infinite-dimensional \cite{Erneux2009}.  This circumstance is hardly surprising, considering that we need to specify initial conditions over an interval. Any solution can be
written as
\[
x = \sum_n \lambda_n e^{\sigma_n s},
\]
with the coefficients $\lambda_n$ determined by the initial conditions, and $\sigma_n$ the roots of the characteristic equation
\[
\sigma - a e^{-\sigma} = 0.
\]
The roots of the characteristic equation, $\sigma_n$, are given by the set-valued Lambert $W$ function:
\[
\{\sigma_n\}=W(a).
\]
The real part of the first several branches of the Lambert $W$ function is shown in Figure \ref{fig:lambert-w}.

\begin{figure}[h]

\centering

\includegraphics[width=250pt]{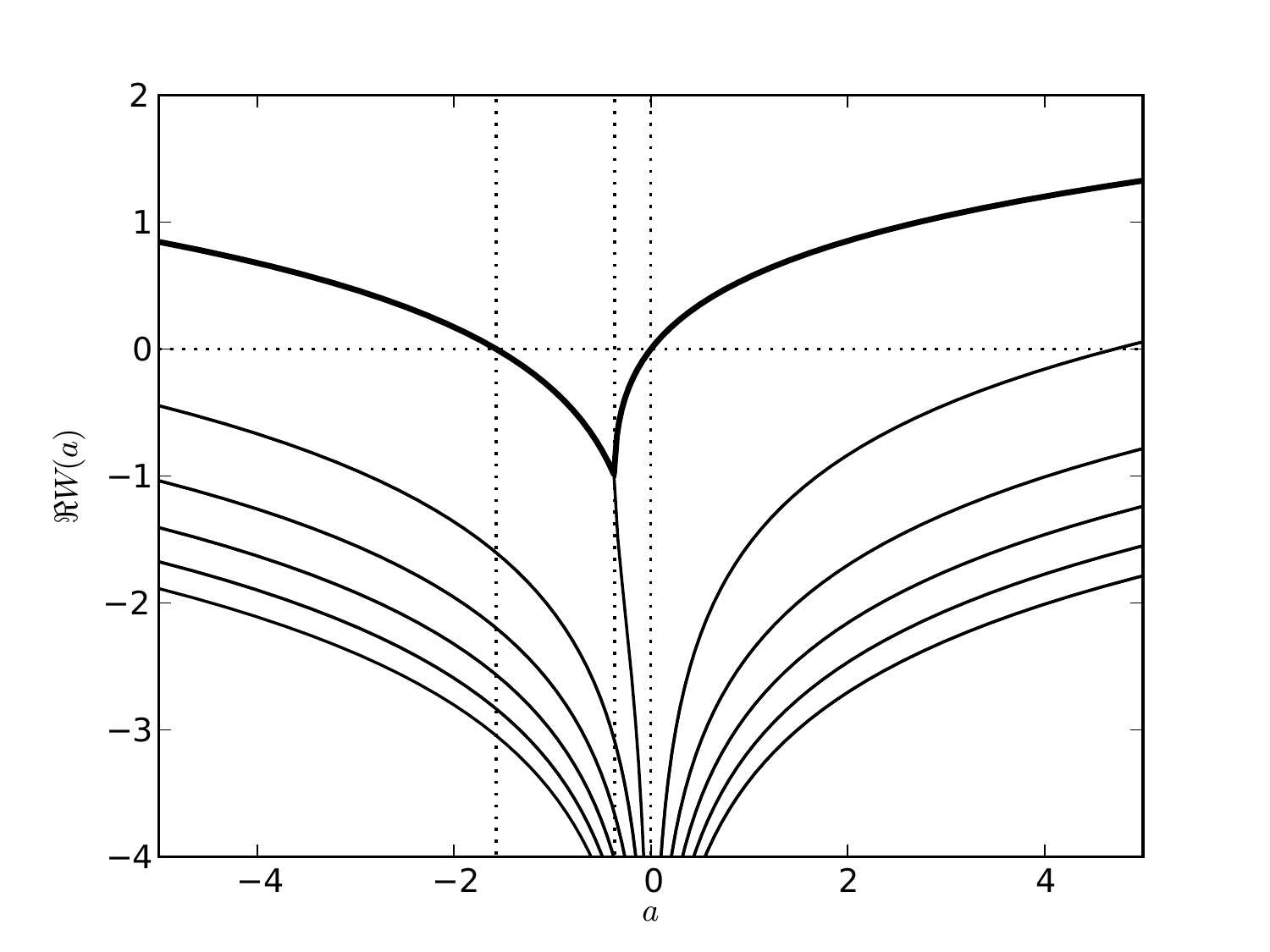}
\caption{Real part of the Lambert $W$ function, branches $-5$ to $5$, with the principal branch highlighted in bold.}
\label{fig:lambert-w}

\end{figure}

We proceed to analyze the stability of Equation \ref{eq:canonical-dde}.  Let $W_k$ denote the $k$th branch of the Lambert $W$ function.   Figure \ref{fig:lambert-w} hints at the following Lemma:

\begin{lemma}[Shinozaki and Mori, 2006 \cite{Shinozaki2006}]\label{lemma:shinozaki2006}
The real part of the principal branch of the Lambert $W$ function is no less than that of any other branch, i.e.,
\[
\max\{\Re(W_k(z)) | k = 0, \pm 1, . . . , \pm \infty\} = \Re(W_0(z)).
\]
\end{lemma}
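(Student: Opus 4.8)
The plan is to convert the statement about real parts into a statement about moduli, where the branch structure of $W$ becomes transparent. The starting point is the defining relation $W_k(z)\,e^{W_k(z)} = z$, which holds on every branch. Taking moduli gives $|W_k(z)|\,e^{\Re W_k(z)} = |z|$, and taking logarithms yields
\[
\Re W_k(z) = \ln|z| - \ln|W_k(z)|.
\]
Since $\ln$ is strictly increasing, $\Re W_k(z)$ is a strictly decreasing function of $|W_k(z)|$. Consequently, maximising $\Re W_k$ over the branch index $k$ is exactly the same as minimising the modulus $|W_k(z)|$, and the lemma reduces to the claim that the principal branch has the smallest modulus: $|W_0(z)| \le |W_k(z)|$ for every $k$.

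First I would record that all branch values lie on a single curve. Writing $w = u + iv = W_k(z)$, the modulus relation reads $u^2 + v^2 = |z|^2 e^{-2u}$, so every $W_k(z)$ sits on the curve $\Gamma := \{(u,v) : v^2 = |z|^2 e^{-2u} - u^2\}$, which is symmetric about the real axis. Differentiating the constraint gives $\frac{d(v^2)}{du} = -2(u^2 + u + v^2)$, which is negative precisely when $(u+\tfrac12)^2 + v^2 > \tfrac14$, i.e. off the small disk $|w + \tfrac12| \le \tfrac12$. Thus, away from this disk, $|v| = |\Im w|$ and $u = \Re w$ move in opposite directions along $\Gamma$: the branch whose imaginary part is smallest in magnitude is the one whose real part is largest.

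Next I would invoke the branch ordering by imaginary part. Taking the argument of the defining relation gives $\arg W_k(z) + \Im W_k(z) \equiv \arg z \pmod{2\pi}$; with the principal argument $\arg W_k(z) \in (-\pi,\pi]$ and the standard indexing, $\Im W_k(z)$ is confined to an interval centred near $2\pi k$. In particular $|\Im W_0(z)|$ is the smallest among all branches. Combined with the monotonicity on $\Gamma$ from the previous step, this shows $W_0(z)$ attains the largest real part, completing the argument.

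The hard part will be making the two geometric inputs fully rigorous near the branch point $z = -1/e$. There $W_0$ and $W_{\pm 1}$ coalesce and the relevant branch values can fall inside the exceptional disk $|w + \tfrac12| \le \tfrac12$, where the monotone correspondence between $|\Im w|$ and $|w|$ along $\Gamma$ breaks down; one must instead appeal to the precise characterisation of the imaginary-part ranges of the Lambert $W$ branches to confirm that the principal branch remains extremal. Establishing that clean interval description, and checking the low-index branches by hand in the neighbourhood of the branch point, is where the real work lies; the reduction via the modulus relation is what makes everything else routine.
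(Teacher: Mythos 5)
The first thing to note is that the paper does not prove this lemma at all: it is imported verbatim as a cited result of Shinozaki and Mori (2006), and the paper's own contributions (Lemmas \ref{lemma:dde-stability} and \ref{lemma:dde-optimality}) merely invoke it. So there is no in-paper argument to compare against, and your attempt must stand on its own as a reconstruction of the cited proof. Your opening reduction is correct and is in fact the heart of the standard argument: taking moduli in $W_k(z)e^{W_k(z)}=z$ gives $\Re W_k(z)=\ln|z|-\ln|W_k(z)|$ (for $z\neq 0$), so maximising the real part over branches is equivalent to minimising the modulus, and all branch values lie on the level curve $v^2=|z|^2e^{-2u}-u^2$, along which $v^2$ is strictly decreasing in $u$ outside the disk $\left|w+\tfrac12\right|\le\tfrac12$. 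Up to this point your computation is sound.

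The gap is in your second geometric input, and it is worse than the technical loose end you describe: the claim that $\Im W_k(z)$ lies in ``an interval centred near $2\pi k$'' is false for $k=-1$. For $z\in[-1/e,0)$ the branch $W_{-1}(z)$ is \emph{real}, so $\Im W_{-1}(z)=0$ ties exactly with $\Im W_0(z)=0$; the imaginary-part ordering can only ever be non-strict and cannot separate $W_0$ from $W_{-1}$ there. Moreover, this is precisely the regime where your first input also degenerates: both branch values sit on the $v=0$ slice of the level curve, where monotonicity says nothing, and for $|z|<1/e$ the level curve actually splits into \emph{two connected components} (the set $\{u: |z|e^{-u}\ge|u|\}$ is then a union of two intervals), so the monotone correspondence between $|v|$ and $u$ cannot compare points on different components at all. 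A complete proof needs a separate mechanism here --- for instance, that the boundary of the range of $W_0$, the curve $\{-\eta\cot\eta+i\eta\}$, satisfies $|{-\eta\cot\eta+i\eta}|=|\eta/\sin\eta|\ge 1$, so the open unit disk lies in the range of $W_0$ and hence $|W_0(z)|<1\le|W_k(z)|$ whenever $|z|<1/e$ --- and then a genuine case analysis for moderate $|z|$, where branch values can still enter the exceptional disk. Since you explicitly defer all of this as ``where the real work lies,'' what you have is the right strategy (essentially that of Shinozaki and Mori's published proof), but not yet a proof; the deferred step is the actual mathematical content of the lemma, not a routine verification.
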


\begin{lemma}\label{lemma:dde-stability}
The zero-solution of Equation \ref{eq:canonical-dde} is globally stable if
\[
-\pi/2 \leq a \leq 0.
\]
\end{lemma}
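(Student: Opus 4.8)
The plan is to collapse the infinite characteristic spectrum to a single root using the tools already in hand. By Lemma \ref{lemma:shinozaki2006}, the root of $\sigma - ae^{-\sigma}=0$ of largest real part is the principal-branch value $\sigma_0 = W_0(a)$. Because Equation \ref{eq:canonical-dde} is linear, the zero-solution is globally stable exactly when every root satisfies $\Re(\sigma_n)\le 0$ with any imaginary-axis root simple; Shinozaki--Mori reduces this to controlling $\Re(W_0(a))$ alone. I would therefore aim to show $\Re(W_0(a))\le 0$ for all $a\in[-\pi/2,0]$, with strict inequality on the open interval.

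First I would dispatch the real subinterval $a\in[-1/e,0]$. There $W_0(a)$ is real with values in $[-1,0]$, so $\Re(W_0(a))=W_0(a)\le 0$ at once, and $W_0(0)=0$ records the lone, simple root at the right endpoint.

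The substantive case is $a\in(-\pi/2,-1/e)$, where $\sigma_0$ becomes genuinely complex. Rather than track its real part by direct computation, I would run a crossing argument. A purely imaginary root $\sigma=i\omega$ would force, on splitting $i\omega = ae^{-i\omega}$ into real and imaginary parts, both $a\cos\omega=0$ and $\omega=-a\sin\omega$. For $a\in(-\pi/2,0)$ the first identity compels $\cos\omega=0$, hence $|\omega|\ge\pi/2$ and $|\sin\omega|=1$; but then the second identity yields $|\omega|=|a|<\pi/2$, a contradiction. Thus no nonzero purely imaginary root exists on the open interval, and $\sigma=0$ arises only at $a=0$. Since $W_0(a)$ varies continuously with $a$ and is real-negative for $a$ just below $0$, its real part cannot enter the right half-plane without first meeting the imaginary axis at some interior $a^\ast$ — which the crossing analysis forbids. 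Hence $\Re(W_0(a))<0$ throughout $(-\pi/2,0)$.

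Finally I would treat the boundary $a=-\pi/2$, where the exceptional solution $\sin(\tfrac{\pi}{2}s)$ already exhibited signals the roots $\sigma=\pm i\pi/2$; one checks these are simple, so they contribute bounded oscillation rather than growth, and stability (though not asymptotic stability) persists. Collecting the three regimes gives $\Re(\sigma_n)\le 0$ everywhere with only simple imaginary-axis roots at the two endpoints, and linearity upgrades this spectral condition to global stability of the zero-solution. I expect the main obstacle to be justifying continuity of the dominant root across $a=-1/e$, where the principal branch departs the real axis along its branch cut: one must confirm that $W_0$ is the branch whose real part is tracked continuously there, so that the no-crossing conclusion genuinely governs the dominant root furnished by Lemma \ref{lemma:shinozaki2006}.
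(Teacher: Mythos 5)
Your proof is correct, and its skeleton is the same as the paper's: reduce the infinite spectrum to the principal branch via Lemma \ref{lemma:shinozaki2006}, then show $\Re(W_0(a))\le 0$ on $[-\pi/2,0]$, with the endpoint observations $W(0)\ni 0$ and $W(-\pi/2)\ni \imath\pi/2$ marking exactly where the real part reaches zero. The difference lies in how the middle step is justified. The paper's proof is a two-line sketch that delegates this step to Figure \ref{fig:lambert-w}, i.e., to the plotted real part of $W_0$; you instead give an analytic crossing argument: a purely imaginary root $\imath\omega$ would require $a\cos\omega=0$ and $\omega=-a\sin\omega$, which for $a\in(-\pi/2,0)$ forces the contradiction $\pi/2\le|\omega|=|a|<\pi/2$, so by continuity the dominant root can never reach the imaginary axis on the open interval. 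This buys a self-contained, rigorous proof where the paper offers only a graphical appeal, and it also closes gaps the paper leaves implicit: the simplicity of the boundary roots $\pm\imath\pi/2$ (indeed $\frac{d}{d\sigma}(\sigma-ae^{-\sigma})=1+\sigma\ne 0$ there, so they are simple and yield bounded oscillation, matching the exhibited solution $\sin(\tfrac{\pi}{2}s)$), and the behaviour at the branch point $a=-1/e$. The obstacle you flag at the end is real but benign: for $a<-1/e$ on the cut, the two limiting values of $W_0$ from above and below are complex conjugates, both are characteristic roots, and they share the same (maximal) real part, so $\Re(W_0(a))$ is a well-defined continuous function of $a$ on all of $[-\pi/2,0]$ and your intermediate-value argument goes through; the only multiple root in the interval is the double real root $\sigma=-1$ at $a=-1/e$ itself, which is harmless since its real part is strictly negative.
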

\begin{proof}
The proof follows directly from Lemma \ref{lemma:shinozaki2006}, Figure \ref{fig:lambert-w}, and the observations that $W(-\pi/2) \ni \imath \pi/2$ and $W(0) \ni 0$.
\end{proof}

\begin{lemma}\label{lemma:dde-optimality}
The quickest worst-case convergence to the zero-solution of Equation \ref{eq:canonical-dde} is guaranteed if
\[
a = -\frac{1}{e},
\]
irrespective of the initial conditions.
\end{lemma}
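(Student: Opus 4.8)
The plan is to reduce the statement to a one-parameter optimization: minimizing the real part of the rightmost characteristic root over the coefficient $a$. Writing any solution as $x=\sum_n \lambda_n e^{\sigma_n s}$, the slowest-decaying mode — hence the worst case over initial conditions — is governed by the root of largest real part. By Lemma \ref{lemma:shinozaki2006} this is always the principal branch, so the guaranteed worst-case decay rate is $e^{\Re(W_0(a))\,s}$, and ``quickest worst-case convergence'' amounts to choosing $a$ to minimize $\Re(W_0(a))$. I would therefore prove that $a\mapsto\Re(W_0(a))$ attains its global minimum $-1$ uniquely at $a=-1/e$, which pins the dominant exponent at $-1$ and yields decay like $e^{-s}$ regardless of the initial data.

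To carry this out I would split the real axis at the branch point $a=-1/e$, where $W_0(-1/e)=-1$. On $[-1/e,0]$ the principal branch is real, and differentiating the characteristic relation $\sigma e^{\sigma}=a$ implicitly gives $d\sigma/da=1/(e^{\sigma}(1+\sigma))$, which is strictly positive for $\sigma\in(-1,0)$; hence $W_0$ increases monotonically from $-1$ at $a=-1/e$ to $0$ at $a=0$. For $a>0$ the branch is real and positive, so nothing below $-1$ can occur there.

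The substantive part is the regime $a<-1/e$, where $\sigma=\alpha+\imath\beta$ is genuinely complex. Here I would use the imaginary part of $\sigma e^{\sigma}=a$ (with $a$ real) to extract the constraint $\alpha=-\beta\cot\beta$, and then show $\alpha$ climbs back from $-1$ toward $0$ as $a$ decreases. Implicit differentiation gives $d\alpha/da=\frac{1}{a}\,\Re\!\left(\frac{\sigma}{1+\sigma}\right)=\frac{1}{a}\cdot\frac{\alpha(1+\alpha)+\beta^2}{(1+\alpha)^2+\beta^2}$, so since $a<0$ the sign hinges entirely on whether $\alpha(1+\alpha)+\beta^2>0$. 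Substituting $\alpha=-\beta\cot\beta$ collapses this numerator to $\frac{\beta(\beta-\tfrac12\sin 2\beta)}{\sin^2\beta}$, and the elementary inequality $\beta>\tfrac12\sin 2\beta$ for $\beta>0$ — which follows by integrating $1-\cos 2\beta=2\sin^2\beta\ge 0$ — makes it positive. Thus $d\alpha/da<0$, i.e. $\alpha$ strictly increases as $a$ moves left of $-1/e$, reaching $0$ at the stability boundary $a=-\pi/2$.

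Assembling the three regimes, $\Re(W_0(a))$ descends to $-1$ at $a=-1/e$ and rises on either side, so its global minimum is $-1$, attained uniquely at $a=-1/e$. I expect the main obstacle to be the complex regime: one must handle the coupled implicit dependence of $\alpha$ and $\beta$ on $a$ cleanly, and the decisive trick is eliminating $\alpha$ through the imaginary-part constraint so that the monotonicity reduces to the single elementary inequality above. The real-branch and positive-$a$ cases are routine by comparison.
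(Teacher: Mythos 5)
Your argument is correct, but it takes a genuinely different — and considerably more rigorous — route than the paper. The paper's entire proof reads that the result ``follows directly from Lemma \ref{lemma:shinozaki2006}, Figure \ref{fig:lambert-w}, and from the fact that $W$ is not differentiable at $-1/e$.'' In other words, after the same reduction you make (by Lemma \ref{lemma:shinozaki2006} the guaranteed worst-case decay rate is $\Re(W_0(a))$), the paper locates the minimum of $\Re(W_0)$ by inspecting the plotted branches, taking the cusp at the branch point $a=-1/e$ — signalled by the cited non-differentiability of $W$ there — as the minimizer. You instead prove that key fact analytically: on $[-1/e,0]$ implicit differentiation gives $d\sigma/da = 1/(e^{\sigma}(1+\sigma)) > 0$ so $W_0$ climbs from $-1$ to $0$; for $a>0$ the branch is positive; and for $a<-1/e$ you track the conjugate pair $\sigma=\alpha\pm\imath\beta$, use the imaginary-part constraint $\alpha=-\beta\cot\beta$, and reduce the sign of $d\alpha/da = \frac{1}{a}\cdot\frac{\alpha(1+\alpha)+\beta^2}{(1+\alpha)^2+\beta^2}$ to the elementary inequality $\beta>\tfrac{1}{2}\sin 2\beta$. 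I have checked this algebra — the numerator does collapse to $\beta\bigl(\beta-\tfrac{1}{2}\sin 2\beta\bigr)/\sin^2\beta$ — so $d\alpha/da<0$ and $\Re(W_0(a))>-1$ strictly to the left of the branch point, which assembles into a complete proof. What your route buys is a self-contained argument with uniqueness of the minimizer, where the paper offers a picture plus a citation; what the paper's route buys is brevity and the intuition that the optimum sits exactly at the branch point, which your computation vindicates. One caveat common to both treatments: at $a=-1/e$ the root $\sigma=-1$ is a double root of the characteristic equation, so the worst-case solution decays like $s\,e^{-s}$ rather than $e^{-s}$; the exponential-rate optimality claim survives, but a fully precise reading of ``quickest worst-case convergence'' would acknowledge this polynomial factor, which neither you nor the paper does.
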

\begin{proof}
The proof follows directly from Lemma \ref{lemma:shinozaki2006}, Figure \ref{fig:lambert-w}, and from the fact that $W$ is not differentiable at $-1/e$ \cite{Corless1996}.
\end{proof}

\section{Trajectory tracking control with delay}\label{section:dde-control}

In the present section we modify the results from Section \ref{section:control} to incorporate a system delay $\tau$. 

First of all, we recall the time-scale separation of the turning rate $\dot{\theta}$ and the airspeed $V_a$ implicit in Equation \ref{eq:turning-rate-law}.  The turning angle $\theta$ specifies the direction of flight.  Considering that the direction of flight acts on the position indirectly through integration, we can expect another time-scale separation between the flight direction and the position dynamics.

In the following we show how this structure can be exploited to analyze stability in face of system delays. 

\subsection{Inner loop}

The outer loop control law given by Equation \ref{eq:outer-loop-control-law} is a function of position only.  By time-scale separation, this implies that the resulting target turning angle $\theta_t$ varies slowly compared to the time constant of the inner loop.  In the present discussion of the inner loop we therefore assume that $d\theta_t/dt=0$.  

Exploiting this assumption and following Section \ref{section:dde}, we obtain the following stability criterion:

\begin{proposition}\label{prop:stability-criterion}
Consider the turning rate law with delay $\tau$:
\begin{equation}\label{eq:delayed-turning-rate-law}
\dot{\theta}(t) = K V_a(t) u(t - \tau),
\end{equation}
cf. Equation \ref{eq:turning-rate-law}. Assume that $d\theta_t/dt=0$, and let $P \in (0, \frac{\pi}{2\tau})$.  Then the control law
\begin{equation}\label{eq:delayed-control-command}
u := \frac{1}{K V_a} P(\theta_t - \theta)
\end{equation}
renders $\theta \to \theta_t$.  Furthermore, this convergence is optimal in the sense of Lemma \ref{lemma:dde-optimality} when
\[
P=\frac{1}{e\tau}.
\]
\end{proposition}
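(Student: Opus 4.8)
The plan is to reduce the closed-loop error dynamics to the canonical delay differential equation of Section \ref{section:dde} and then read off both conclusions from Lemmas \ref{lemma:dde-stability} and \ref{lemma:dde-optimality}. First I would introduce the tracking error $\epsilon(t) := \theta_t - \theta(t)$. Since the inner-loop analysis assumes $d\theta_t/dt = 0$, differentiation gives $\dot{\epsilon}(t) = -\dot{\theta}(t)$, so it suffices to show that $\epsilon$ obeys an equation of the form of Equation \ref{eq:dde}. I would then substitute the command \ref{eq:delayed-control-command}, evaluated at the retarded time $t-\tau$, into the delayed turning-rate law \ref{eq:delayed-turning-rate-law}. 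Writing $u(t-\tau) = \frac{1}{K V_a(t-\tau)} P(\theta_t - \theta(t-\tau))$ and inserting it into $\dot{\theta}(t) = K V_a(t)\, u(t-\tau)$ yields
\[
\dot{\epsilon}(t) = -\frac{V_a(t)}{V_a(t-\tau)}\, P\, \epsilon(t-\tau).
\]

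The hard part will be disposing of the airspeed ratio $V_a(t)/V_a(t-\tau)$: only when it equals unity is the error equation exactly linear with constant coefficient, the case where the feedforward factor $1/(K V_a)$ in the command cleanly cancels the plant gain $K V_a$, just as in the undelayed Proposition. Here I would invoke the time-scale separation emphasized in Section \ref{section:dde-control}, by which $V_a$ evolves slowly relative to the turning dynamics; over a single delay interval of length $\tau$ this makes $V_a(t) \approx V_a(t-\tau)$ and the ratio collapses to $1$. This gives
\[
\dot{\epsilon}(t) = -P\, \epsilon(t-\tau),
\]
which is precisely Equation \ref{eq:dde} with $y = \epsilon$ and $k = -P$, and hence has canonical coefficient $a = \tau k = -\tau P$.

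It then remains only to translate the hypotheses on $P$ through the relation $a = -\tau P$. For $P \in (0, \pi/(2\tau))$ we have $a \in (-\pi/2, 0)$, so Lemma \ref{lemma:dde-stability} guarantees that the zero-solution $\epsilon \equiv 0$ is stable, i.e.\ $\theta \to \theta_t$. For the optimality claim, setting $P = 1/(e\tau)$ gives $a = -1/e$, the value singled out in Lemma \ref{lemma:dde-optimality}, so the worst-case convergence is as fast as possible irrespective of the initial conditions. I expect the only genuine subtlety to be the justification of the airspeed cancellation; the remaining manipulations are bookkeeping on the sign and scaling of $a$.
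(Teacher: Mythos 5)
Your proposal is correct and follows essentially the same route as the paper: substitute the delayed control command into the delayed turning-rate law, reduce the tracking error to the canonical delay equation with coefficient $a = -P\tau$, and read off stability and optimality from Lemmas \ref{lemma:dde-stability} and \ref{lemma:dde-optimality}. If anything, you are more careful than the paper's own proof, which silently treats the airspeed ratio $V_a(t)/V_a(t-\tau)$ as unity, whereas you explicitly flag this cancellation and justify it by the time-scale separation between the airspeed and the turning dynamics.
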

\begin{proof}
Substituting Equation \ref{eq:delayed-control-command} into the system given by Equation \ref{eq:delayed-turning-rate-law}, and using our assumption that $d\theta_t/dt=0$, we obtain the controlled system
\begin{equation}\label{eq:delayed-controlled-inner-system}
\frac{d}{dt}(\theta_t - \theta)(t) = -P(\theta_t - \theta)(t - \tau).
\end{equation}
From Lemma \ref{lemma:dde-stability}, we obtain that the controlled system given by Equation \ref{eq:delayed-controlled-inner-system} is stable if $-\pi/2~\leq~-P\tau~\leq~0$.
\end{proof}

This quantifies what we already know from practice, namely, that the inner loop control gain $P$ cannot be arbitrarily large.  Proposition \ref{prop:stability-criterion} aids us in selecting a gain appropriate for the specific system implementation at hand.

\subsection{Outer loop}

The outer loop treats the target direction $\theta_t$ as a virtual control input, capable of being commanded pseudo-instantaneously.  This assumption is motivated by the time-scale separation.  The target turning angle $\theta_t$ will be commanded to the inner loop without delay within the same control software cycle.

Furthermore, we assume the system delay $\tau$ to be insignificant compared to the time scale of the position dynamics.  As we will be able to verify with the simulation results from the next section, the system delay is typically a small fraction of the flight time of a figure of eight.\footnote{In contrast, a kite can be made to spin rapidly about its tether.}  Hence, we do not include delay in our analysis of the outer loop.

\section{Simulation results}

In \cite{Baayen2012-2} we show how one may simulate a kite system using potential flow computations and a discretized tether.  Using an identical system setup, we now present several simulation results. 

We start out by presenting a reference result without delay.  Selecting an outer loop gain of $L=5$ and an inner loop gain of $P=10$, the target trajectory is tracked, for all practical purposes, perfectly. See Figures \ref{fig:simulation-trajectory-D0L5P10} and \ref{fig:simulation-angle-D0L5P10}. 

By introducing a delay of $\tau=0.2$ s, the gain $P=10$ lies outside of the range given by Proposition \ref{prop:stability-criterion}.  As expected, this is reflected in the instability of the turning angle. See Figures \ref{fig:simulation-trajectory-D200L5P10} and \ref{fig:simulation-angle-D200L5P10}. 

Finally, by choosing $L=5$ and the optimal inner loop gain $P=1/(e\tau)$, we obtain the convergent trajectory and stable turning angle tracking of Figures \ref{fig:simulation-trajectory-D200L5Popt} and \ref{fig:simulation-angle-D200L5Popt}.  We note, however, that trajectory tracking performance suffers somewhat at the expense of stable turning angle tracking. 

We conclude that Proposition \ref{prop:stability-criterion} provides a useful tool for understanding, and eliminating, delay-induced instabilities observed in kite control systems. 

\begin{figure}[p]
\begin{minipage}[t]{0.47\linewidth}

\centering

\includegraphics[trim=0mm 0mm 0mm 0mm,width=160pt]{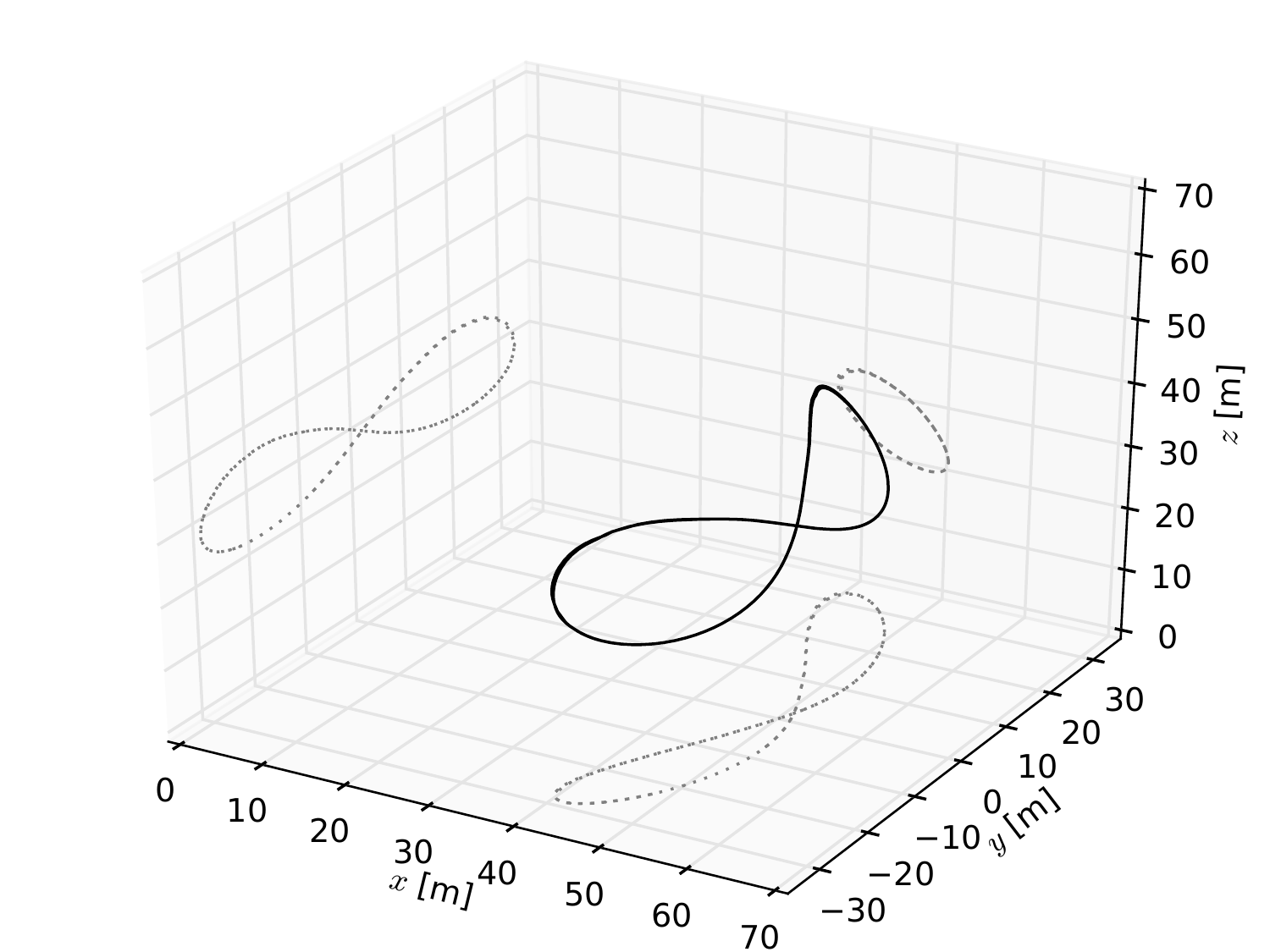}
\caption{Trajectory with $\tau=0$ s, $L=5$, and $P=10$.}
\label{fig:simulation-trajectory-D0L5P10}

\end{minipage}
\hspace{0.3cm}
\begin{minipage}[t]{0.47\linewidth}

\centering

\includegraphics[trim=0mm 0mm 0mm 0mm,width=160pt]{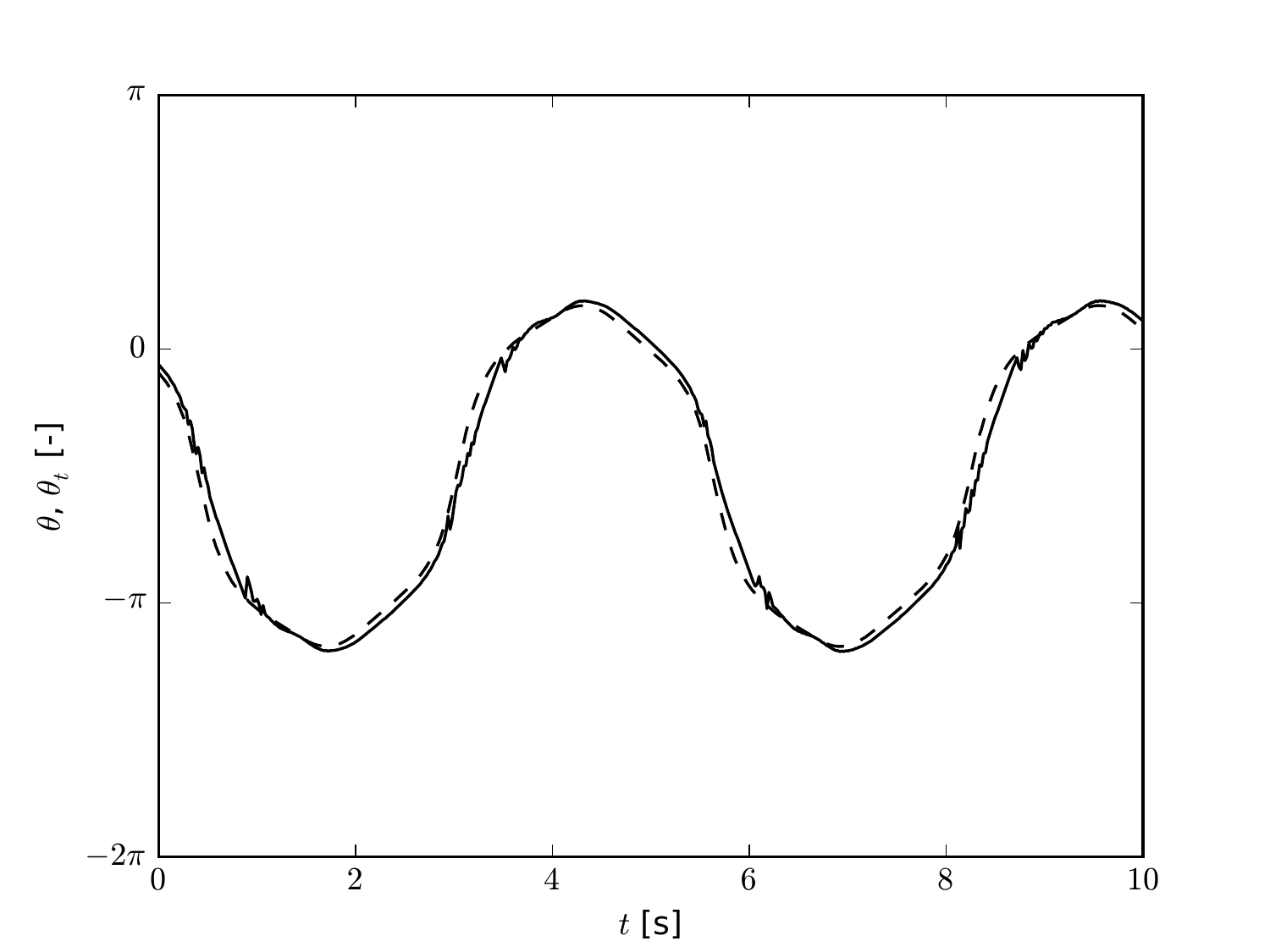}
\caption{Turning angle (solid) with target (dashed), $\tau=0$ s, $L=5$, and $P=10$.}
\label{fig:simulation-angle-D0L5P10}

\end{minipage}
\end{figure}

\begin{figure}[p]
\begin{minipage}[t]{0.47\linewidth}

\centering

\includegraphics[trim=0mm 0mm 0mm 0mm,width=160pt]{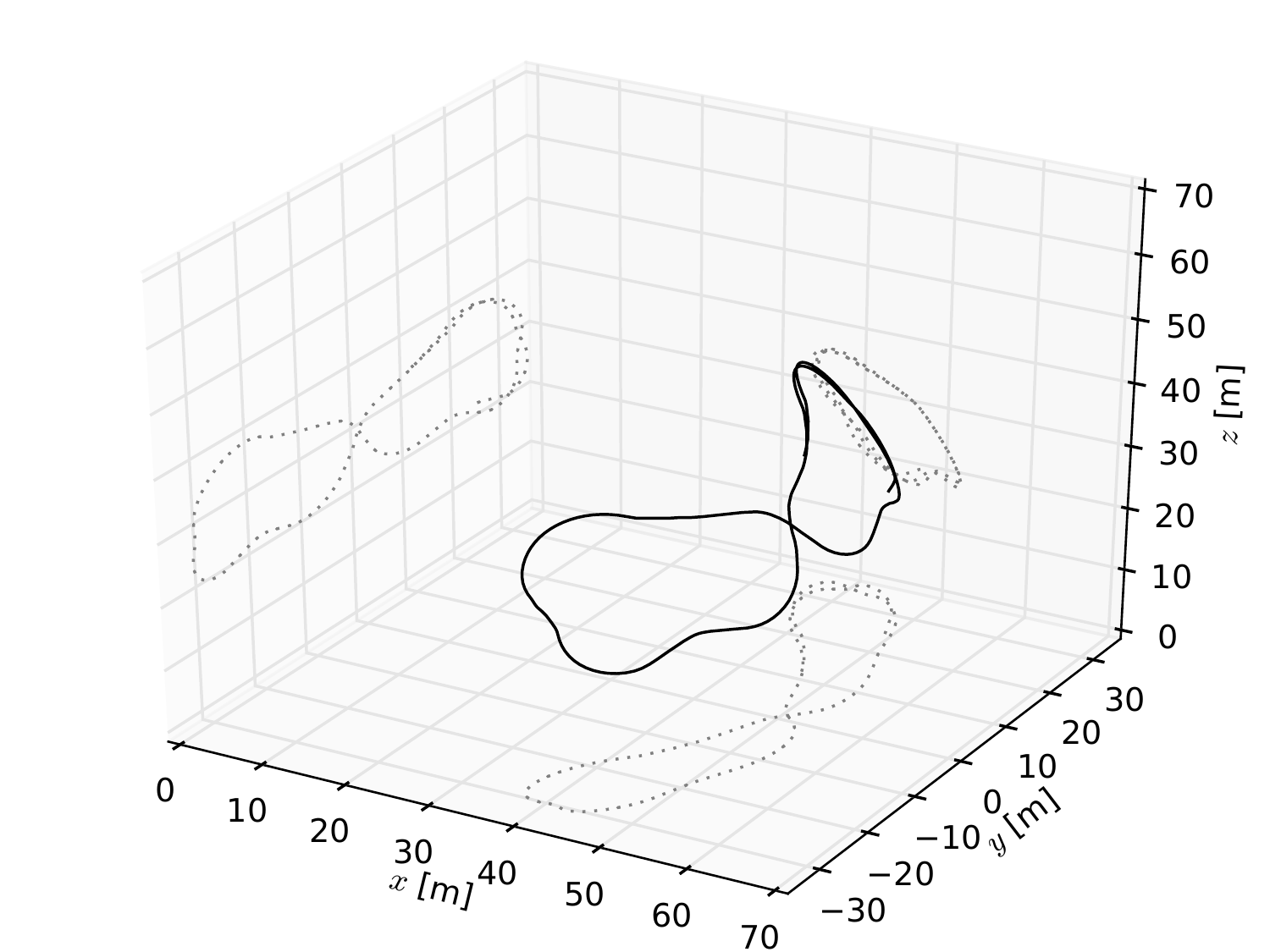}
\caption{Trajectory with $\tau=0.2$ s, $L=5$, and $P=10$.}
\label{fig:simulation-trajectory-D200L5P10}

\end{minipage}
\hspace{0.3cm}
\begin{minipage}[t]{0.47\linewidth}

\centering

\includegraphics[trim=0mm 0mm 0mm 0mm,width=160pt]{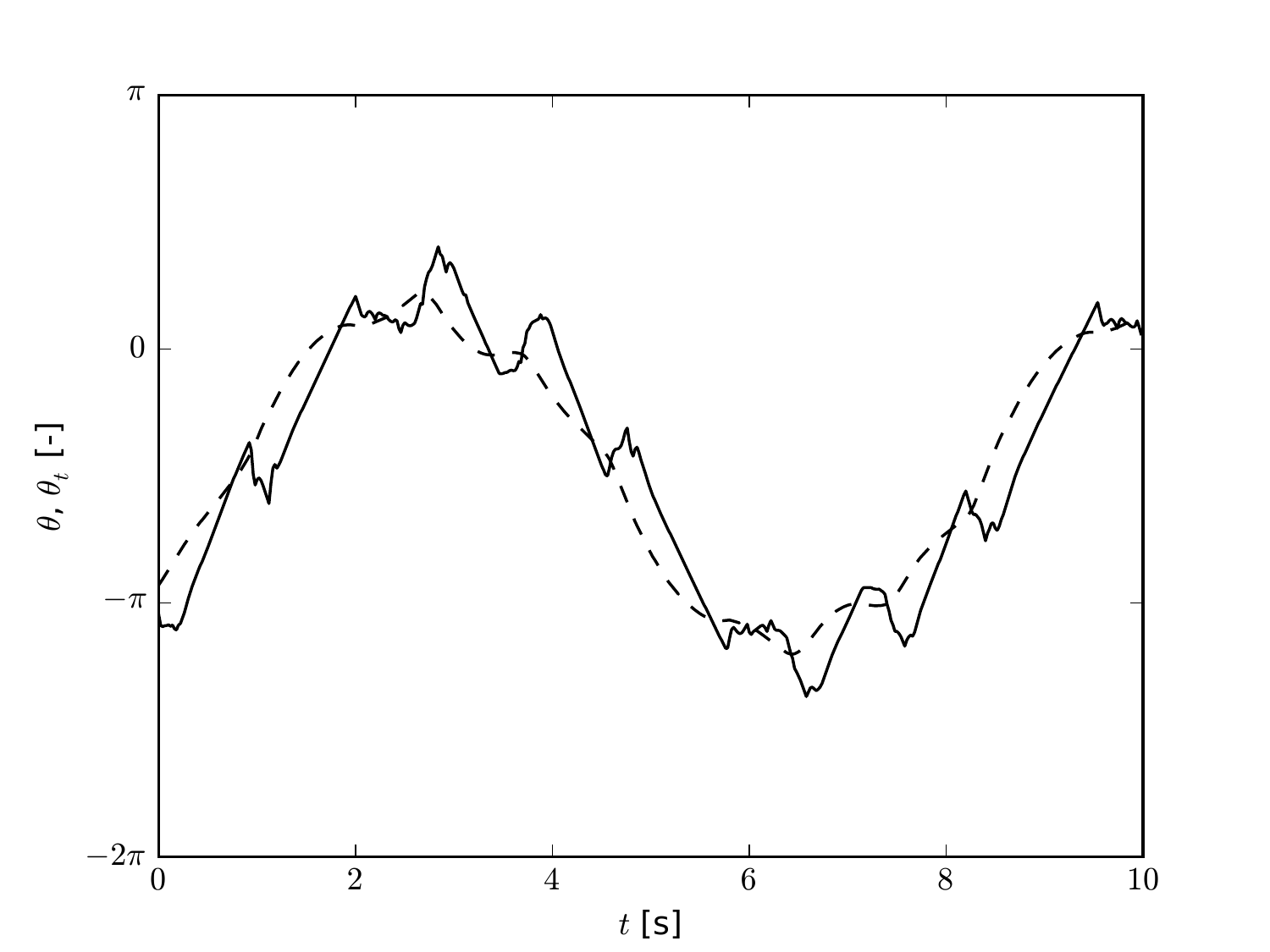}
\caption{Turning angle (solid) with target (dashed), $\tau=0.2$ s, $L=5$, and $P=10$.}
\label{fig:simulation-angle-D200L5P10}

\end{minipage}
\end{figure}

\begin{figure}[p]
\begin{minipage}[t]{0.47\linewidth}

\centering

\includegraphics[trim=0mm 0mm 0mm 0mm,width=160pt]{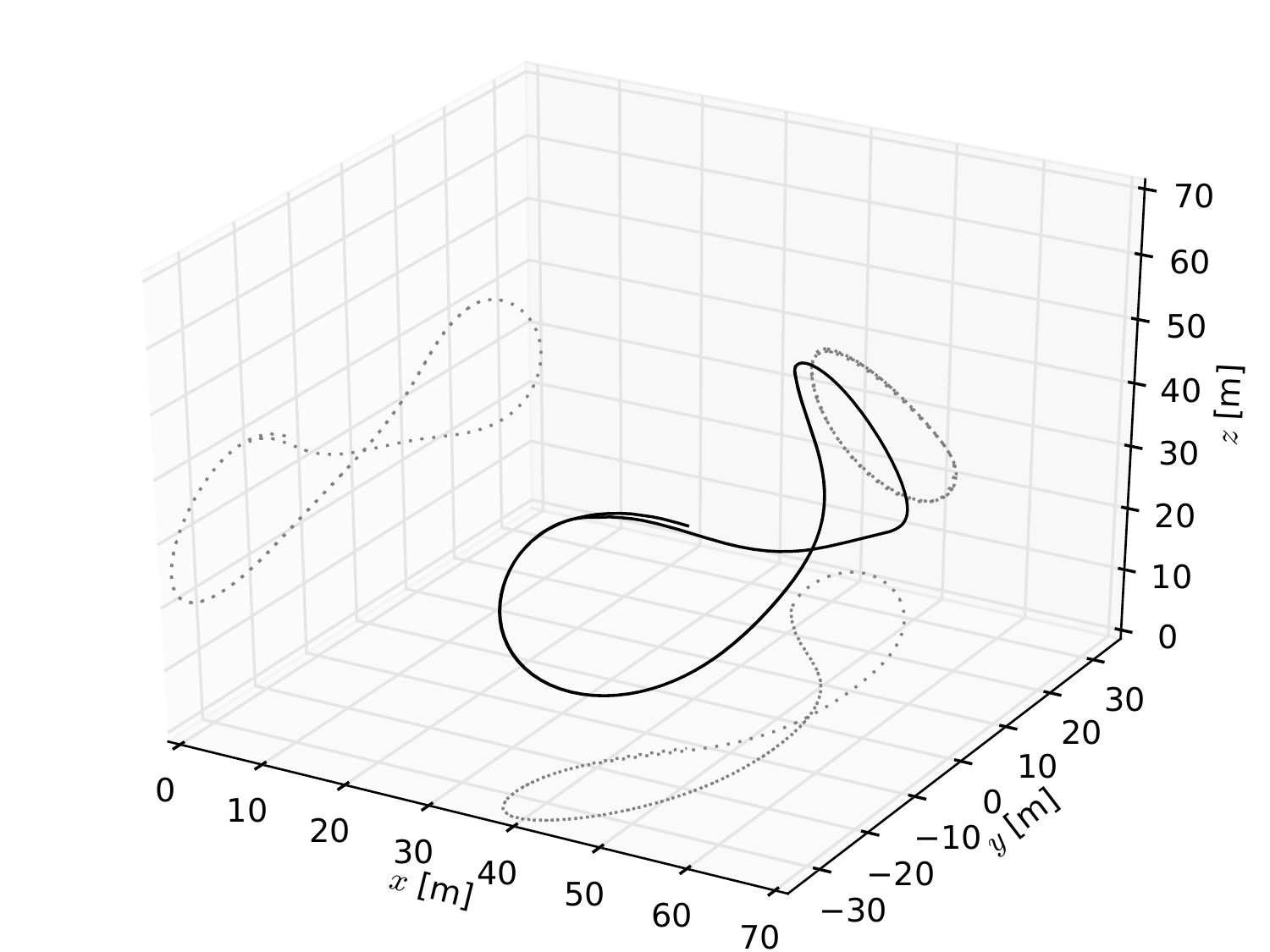}
\caption{Trajectory with $\tau=0.2$ s, $L=5$, and $P=1/(e\tau)$.}
\label{fig:simulation-trajectory-D200L5Popt}

\end{minipage}
\hspace{0.3cm}
\begin{minipage}[t]{0.47\linewidth}

\centering

\includegraphics[trim=0mm 0mm 0mm 0mm,width=160pt]{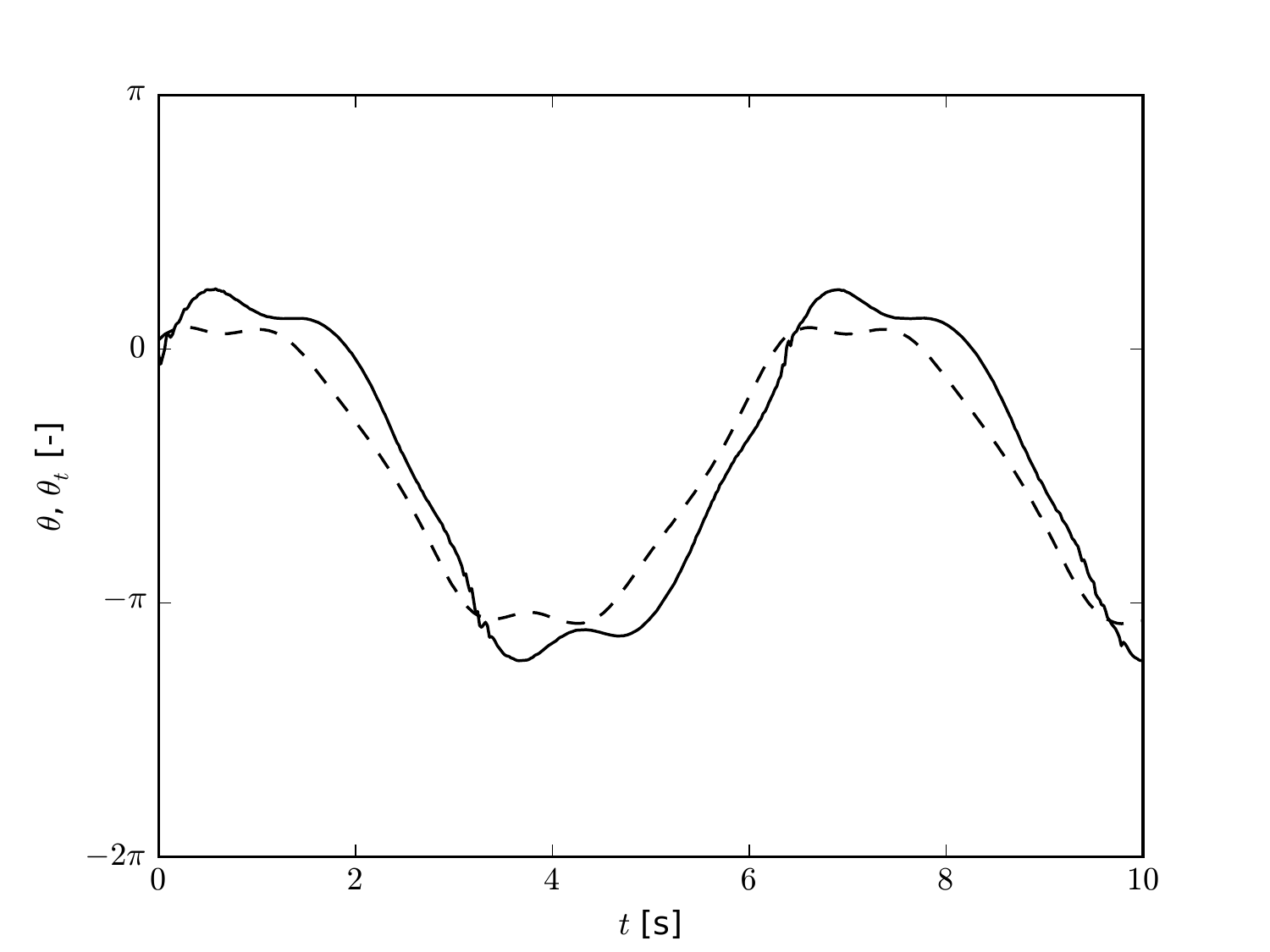}
\caption{Turning angle (solid) with target (dashed), $\tau=0.2$ s, $L=5$, and $P=1/(e\tau)$.}
\label{fig:simulation-angle-D200L5Popt}

\end{minipage}
\end{figure}

\section{Concluding remarks}

In this paper we have presented, based on recent experimental evidence, a simplification of our previously published cascaded kite trajectory tracking control algorithm.  Furthermore, we have shown how the Lambert $W$ function can be used to analyze the stability
of kite trajectory tracking control systems with delay.  We demonstrated how such an analysis leads to a methodology
for gain selection in face of delay.  Finally, we demonstrated the validity of our approach with simulation results.

Future research may focus on the influence of actuator dynamics on inner control loop performance, and similarly, on the influence of inner loop dynamics on the performance of the outer loop.

\bibliography{kite-control}
\bibliographystyle{hplain}

\end{document}